\algnewcommand{\LeftComment}[1]{\Statex \(\triangleright\) #1}
\definecolor{Gray}{gray}{0.9}
\def\wtbx{\widetilde{\bbx}}
\def\wtbL{\widetilde{\bbL}}
\def\h{\text{h}}
\def\c{\text{c}}
\def\g{\text{g}}
\newtheorem{proposition}{\hspace{0pt}\bf Proposition}
\newtheorem{remark}{\hspace{0pt}\bf Remark}
\title{Convolutional Filtering in Simplicial Complexes}
\name{Elvin Isufi and Maosheng Yang\thanks{Faculty of Electrical Engineering, Mathematics and Computer Science, Delft University of Technology, Delft, The Netherlands. This work is supported by the TU Delft AI Labs Programme. e-mail: \{e.isufi-1\}@tudelft.nl}
\address{~}}
\begin{document}
\ninept
\maketitle
\begin{abstract}
This paper proposes convolutional filtering for data whose structure can be modeled by a simplicial complex (SC). SCs are mathematical tools that not only capture pairwise relationships as graphs but account also for higher-order network structures. These filters are built by following the shift-and-sum principle of the convolution operation and rely on the Hodge-Laplacians to shift the signal within the simplex. But since in SCs we have also inter-simplex coupling, we use the incidence matrices to transfer the signal in adjacent simplices and build a filter bank to jointly filter signals from different levels. We prove some interesting properties for the proposed filter bank, including permutation and orientation equivariance, a computational complexity that is linear in the SC dimension, and a spectral interpretation using the simplicial Fourier transform. We illustrate the proposed approach with numerical experiments.
\end{abstract}
\begin{keywords}
Hodge Laplacian, simplicial filter, topological signal processing. \vspace{-.5cm}\vskip-.5cm
\end{keywords}


\section{Introduction}\vskip-.15cm
\label{sec:intro}

Processing data with an irregular structure has been the center of the research attention in the last decade, generalizing signal processing \cite{ortega2018graph} and neural network techniques \cite{wu2020comprehensive} to graphs. One common factor behind the success of these two directions is the concept of graph filtering, that extends the convolution operation from the Euclidean domain to the graph domain \cite{sandryhaila13-dspg, shuman13-mag, gama2020graphs}. However, graph filters exploit only pair-wise relationships in the network and consider the data as signals over its vertices. But data often live on higher-order network structures such as edges and triangles \cite{lim2020hodge,barbarossa2020topological,schaub2021signal}. Typical examples include water flows in water networks or traffic flows in transportation networks.

To deal with this type of data and account for their structure, the more recent research attention has shifted towards data processing with a simplicial structure \cite{robinson2014topological, courtney2016generalized,barbarossa2020topologicala, schaub2021signal}, i.e., data living over edges, triangles, and so on. The work in \cite{schaub2018flow} considers the problem of flow denoising via simplicial regularization, while \cite{jia2019graph} focused on flow interpolation. Authors in \cite{barbarossa2020topological} introduced the concept of simplicial Fourier transform (SFT), which shows that the approaches in \cite{schaub2018flow, jia2019graph} behave as low-pass filtering. To further increase the filter flexibility, the work in \cite{yang2021finite} used the shift-and-sum principle to develop convolutional simplicial filtering via polynomials in the Hodge Laplacian \cite{lim2020hodge}. In parallel, \cite{ebli2020simplicial} proposed the simplicial convolutional networks by extending the popular graph neural networks (GNNs) to the simplex. The aggregation function in \cite{ebli2020simplicial} respects the convolution principle, and it is a particular case of the filter in \cite{yang2021finite}. Other simplicial neural networks include \cite{roddenberry2019hodgenet, roddenberry2021principled, bodnar2021weisfeiler}, which, although built on the principle of message passing \cite{gilmer2017neural} can be seen as an order one simplicial convolution nested into point-wise nonlinearities.

Despite the emerging success, simplicial convolutional filters operate only within a simplicial level and process only one type of simplicial signal. Consequently, they are agnostic to the full SC structure since simplicial signals have not only \emph{intra}-simplex proximities but also \emph{inter}-simplex proximities \cite{lim2020hodge}. E.g., a simplicial convolutional filter operating over the edge space processes only edge signals but ignores the effect of the adjacent vertex and triangle signals. For instance, in a water network we may have missing edge flow measurements but not nodal pressures; hence, we could use the latter and the SC coupling to infer the edge signals. To account for such a coupling in a principled way, we extend \cite{yang2021finite} to a filter bank to process jointly all SC signals.

Our specific contribution is threefold: i) we propose a filter bank composed of simplicial convolutional filters to process jointly signals living on different simplicial levels; ii) we show the filter bank operates by using local information and it is equivariant to permutations and orientations in the simplex; iii) we characterize the spectral response of the filter bank and show how the different simplicial signals are filtered only in different simplicial frequencies. \vspace{-.25cm}



\section{Simplicial Signal Processing}\vskip-.15cm
\label{sec:prelim}

In this section, we first lay down some basic concepts about simplicial complexes. Then, we discuss the simplicial signals.\vspace{-0.25cm}

\subsection{Geometry of Simplicial Complexes}\vspace{-.1cm}

Given a set of vertices $\ccalV = \{1, \ldots, N\}$, a $k-$simplex $\ccalS^k$ is a subset of $\ccalV$ containing $k+1$ distinct elements. 
%
%
Typical simplices are the $0$-simplex including a node, the $1-$simplex including an edge, and the $2-$simplex including a triangle \cite{lim2020hodge}. A simplicial complex of order $K$, $\ccalX^K$, is a collection of simplices such that for any $k-$simplex $\ccalS^k$, it includes any subset $\ccalS^{k-1}\subset \ccalS^k$ for all $k = 0, \ldots, K$. 
%
%
The number of $k-$simplices in $\ccalX^K$ is $N_k$. A simplicial complex of order $K = 2$ formed by two disjoint triangles $N_{\!2} = 2$, includes all their edges $N_1 \!=\! 6$ and nodes $N_{\!0} \!=\! 6$. We also say two simplices $\ccalS^{k}$ and $\ccalS^{k+1}\!$ are adjacent if they belong to the same SC.

We represent the proximities between the different simplices via the incidence matrices $\bbB_k \in \reals^{N_{k-1} \times N_k}$, which have as row index the $(k-1)-$simplices and as column index the $k-$simplices. For instance, $\bbB_1$ is the vertex-to-edge incidence matrix and $\bbB_2$ is the edge-to-triangle incidence matrix. \vspace{-.2cm} 

\begin{property}\label{prop_incl} Two adjacent incidence matrices $\bbB_k$ and  $\bbB_{k+1}$ satisfy the boundary condition $\bbB_k\bbB_{k+1} = \bbzero$ for all $k \ge 1$ \cite{lim2020hodge}.
\end{property}\vspace{-.2cm}

Using the incidence matrices, we can fully represent the structure of the SC by the Hodge Laplacian matrices, \vspace{-.2cm}
\begin{align}\label{eq.HodgeLapl}
\begin{split}
&\bbL_0 = \bbB_1\bbB_1^\top\\
&\bbL_k = \bbL_{k\ell} + \bbL_{ku} =\bbB_k^\top\bbB_k + \bbB_{k+1}\bbB_{k+1}^\top~k\!=\! 1, \!\ldots\!, K\!-\!1\\
&\bbL_K = \bbB_K^\top\bbB_K.
\end{split}
\end{align}
That is, the zero-Hodge Laplacian $\bbL_0 \!=\! \bbB_1\bbB_1^\top$ is the popular graph Laplacian and indicates vertex proximities based on their upper-adjacency via an edge \cite{shuman13-mag}. All the intermediate Laplacians $\bbL_k$ comprise two terms: the lower Laplacian $\bbL_{k\ell}:=\bbB_k^\top\bbB_k$ that captures lower-adjacencies of $k-$simplices; and the upper Laplacian $\bbL_{ku}\!:=\!\bbB_{k+1}\bbB_{k+1}$ that captures upper-adjacencies of $k-$simplices. E.g., in an SC of order $K = 2$ two edges are lower-adjacent if they have a common vertex and upper-adjacent if they belong to the same triangle. The $K-$th Hodge Laplacian $\bbL_K = \bbB_K^\top\bbB_K$ has only the lower-adjacencies since the SC is of order $K$. \vspace{-.35cm}

\subsection{Signals over Simplicial Complexes}\label{subsec_sigSC}\vspace{-.1cm}

We are interested in processing signals residing over the different simplicial levels by accounting for the overall structure of the simplicial complex. A $k-$simplicial signal, for short a $k-$signal, $x^k$ is a mapping from the $k-$simplex to the real set $\reals^{N_k}$, $x^k: \ccalS^k \to \reals^{N_k}$. We will collect the $k-$signal in vector $\bbx^k = [x^k_1, \ldots, x^k_{N_k}]^\top$, where $x_i^k$ is the signal on the $i$th simplex. The $0-$signal resides over the vertices and matches the graph signal \cite{shuman13-mag}. Likewise, the $1-$signal resides over the edges, the $2-$signal over the triangles, and so on. We will also refer to the collection of all $k-$signals $\bbx^0, \ldots, \bbx^K$ as a \emph{simplicial complex signal}.

Given the simplicial adjacencies within an SC [cf. \eqref{eq.HodgeLapl}] and the coupling between a simplicial complex and its signal, processing SC signals requires exploiting these interactions altogether. We do so by using the Hodge Laplacians in \eqref{eq.HodgeLapl} and their Hodge decomposition \cite{lim2020hodge}. The Hodge decomposition decomposes space $\reals^{N_k}$ of a $k-$signal into three orthogonal subspaces\vspace{-.05cm}
\begin{equation}\label{eq.spaceDecomp}
\reals^{N_k} = \text{im}(\bbB_k^\top) \oplus \text{im}(\bbB_{k+1}) \oplus \text{ker}(\bbL_k) 
\end{equation}
where $\text{im}(\cdot)$ and $\text{ker}(\cdot)$ are the image and kernels spaces of a matrix and $\oplus$ is the direct sum of vector spaces. This implies that for any $k-$signal $\bbx^k$ we can find three signals $\bbx^{k-1}$, $\bbx^k_\h$, and $\bbx^{k+1}$ of order $k-1$, $k$, and $k+1$, respectively such that we can write $\bbx^k$ as the sum of three orthogonal components\vspace{-.05cm}
\begin{equation}\label{eq.signDecomp}
\bbx^k = \bbB_k^\top\bbx^{k-1} + \bbB_{k+1}\bbx^{k+1} + \bbx_\h^k.
\end{equation}
This decomposition shows how the inter-simplex coupling imposed by the Hodge Laplacians in \eqref{eq.HodgeLapl} translates into an inter-signal coupling. Specifically:
\begin{itemize}
\item Operation $\bbx_\g^k :=  \bbB_k^\top\bbx^{k-1} \in \text{im}(\bbB_k^\top)$ transforms a $k-1$ signal $\bbx^{k-1}$ into the upper-simplex. Likewise, we can consider the opposite operation $\bbx^{k-1}_{\text{div}} = \bbB_k\bbx^{k}$ which transforms a $k-$signal into a signal in the lower $k-1$ simplex. Particularizing to $k = 1$, we have that $\bbx_\g^1 =  \bbB_1^\top\bbx^{0}$ is an edge flow induced by differentiating the vertex signals $\bbx^{0}$; and we refer to as the \emph{gradient flow}. Likewise, $\bbx^{0}_{\text{div}} := \bbB_1\bbx^{1}$ is a vertex signal obtained by computing the net flow of each node; and we refer to as the \emph{divergence component}.
\item Operation $\bbx_{\text{cur}} ^k := \bbB_{k+1}\bbx^{k+1} \in \text{im}(\bbB_{k+1})$ transforms a $k+1$ signal $\bbx^{k+1}$ into the lower-simplex. Likewise, through the adjoint $\bbB_{k+1}^\top$ we can transform a $k-$signal into the upper-simplex as $\bbx^{k+1}_\c:= \bbB_{k+1}^\top\bbx^k$. Particularizing again to $k = 1$, signal $\bbx_{\text{cur}}^1 = \bbB_{2}\bbx^{2}$ contains edge flows induced by triangle flows $\bbx^{2}$; and it is called a \emph{curl flow}. Likewise, signal $\bbx^{2}_\c = \bbB_{2}^\top\bbx^1$ is a flow circulating along the triangles computed from edge flows.
\item Signal $\bbx_\h^k \in \text{ker}(\bbL_k)$ is called the \emph{harmonic component} and it is that part of a $k-$signals that cannot be induced from the adjacent simplex signals. We can get the harmonic component by solving $\bbL_k\bbx_\h^k = \bbzero$.
\end{itemize}

Given this \emph{inter-simplex} coupling between different $k-$signals, we next leverage the shift-and-sum principle and the Hodge Laplacaians \eqref{eq.HodgeLapl} to induce an \emph{intra-simplex} coupling and develop a principled convolutional filter for SC signals. \vspace{-.2cm}


\section{Filters on Simplicial Complexes}
\label{sec:filt}


\subsection{Simplicial Convolutional Filters}\label{subsec_scf}

For a $k-$signal $\bbx^k$ over a simplex $\ccalS^k$ with Hodge Laplacian $\bbL_k$, $k = 1, \ldots, K-1$, a \emph{simplicial convolution} is defined as
\begin{equation}\label{eq.SimpConc}
\bby^k = \bigg(\sum_{l_1 = 0}^{L_1}\alpha_{l_1}\bbL_{k\ell}^{l_1} + \sum_{l_2 = 0}^{L_2}\beta_{l_2}\bbL_{ku}^{l_2}			\bigg)\bbx^k
\end{equation}
where $\alpha_{0}, \ldots, \alpha_{L_1}$ and $\beta_{0}, \ldots, \beta_{L_2}$ are parameters and $L_1$, $L_2$ are the convolutional orders in the lower-Laplacian $\bbL_{k\ell}$ and upper-Laplacian $\bbL_{ku}$, respectively. Defining then the \emph{simplicial convolutional filtering} matrix
\begin{equation}\label{eq.simplFilter}
 \bbH(\bbL_k):= \bigg(\sum_{l_1 = 0}^{L_1}\alpha_{l_1}\bbL_{k\ell}^{l_1} + \sum_{l_2 = 0}^{L_2}\beta_{l_2}\bbL_{ku}^{l_2}			\bigg)
\end{equation}
we can write \eqref{eq.SimpConc} as $\bby^k =  \bbH(\bbL_k)\bbx^k$. The qualifier convolution comes from the fact that in \eqref{eq.SimpConc} we are shifting the input $\bbx^k$ over simplex $\ccalS^k$ using both its lower- and upper-adjacencies, weighting each shift, and summing all shifted versions. This is analogous to the convolutional operator in discrete-time signal processing \cite{oppenheim2001discrete} and in graph signal processing \cite{sandryhaila13-dspg}. Filter $ \bbH(\bbL_k)$ is the sum of two polynomials because of the form of the Hodge Laplacian $\bbL_k$ [cf. \eqref{eq.HodgeLapl}] and of Property~\ref{prop_incl} (i.e., cross-terms are not present since $\bbL_{k\ell}^{l_1}\bbL_{ku}^{l_2} = \bbzero$). For $k = 0$ the simplicial convolutional filter reduces to the graph filter $\bbH(\bbL_0) = \sum_{l = 0}^{L}\alpha_{l}\bbL_{0\ell}^{l}$ \cite{sandryhaila13-dspg} and for $k = K$ to $\bbH(\bbL_K) = \sum_{l = 0}^{L}\beta_{l}\bbL_{K\ell}^{l}$ because $\bbL_K$ comprises only lower-adjacencies. 

It follows from \eqref{eq.SimpConc} that a simplicial convolutional filter acts on the $k-$signal and propagates neighboring information within simplex $\ccalS^k$ by leveraging paths either via lower- or upper-adjacencies. Computing the output in \eqref{eq.SimpConc} implies accounting for a local information from at most $L = \max\{L_1, L_2\}$ hops away following either of these paths \cite{yang2021finite}. Exploiting the recursions $\bbL_{k\ell}^{l_1}\bbx^k = \bbL_{k\ell}(\bbL_{k\ell}^{l_1-1}\bbx^k)$ and  $\bbL_{ku}^{l_2}\bbx^k = \bbL_{ku}(\bbL_{ku}^{l_2-1}\bbx^k)$ the cost of running a simplicial convolution is of order $\ccalO(LN_k)$. 

But filter \eqref{eq.simplFilter} ignores any influence of signals in the adjacent simplices. Since different signals $\bbx^0, \ldots, \bbx^K$ influence each other via the SC localities, we extend filter \eqref{eq.simplFilter} to account for the latter. For instance, in a water network we may want to process jointly measurements over the junctions (nodes), flows over pipes (edges), and catchments (triangles) to infer an output related to edge flows \cite{herrera2016graph}. 

\subsection{Simplicial Complex Filter Bank}\label{subsec_scfb_prop}

Given an SC signal $\bbx^0, \ldots, \bbx^K$, we define an SC filter bank as that generating the output SC signal 
\begin{align}\label{eq.simpCompFilt}
\begin{split}
\bby^0 &=  \bbH_0(\bbL_0)\bbx^0 +  \bbH_1(\bbL_0)\bbB_1\bbx^1\\
\bby^k &= \bbH_0(\bbL_k)\bbB_k^\top\bbx^{k-1} + \bbH_1(\bbL_k)\bbx^k + \bbH_2(\bbL_k)\bbB_{k+1}\bbx^{k+1}\\
\bby^K &= \bbH_0(\bbL_K)\bbB_K^\top\bbx^{K-1} + \bbH_1(\bbL_K)\bbx^K
\end{split}
\end{align}
where filters $\bbH_i(\bbL_k)$ are of the form in \eqref{eq.simplFilter} and subscript $i$ indicates they have different coefficients. 

This SC filter bank is local in two directions. First, it is \emph{intra}-simplex local since filters $\bbH_i(\bbL_k)$ capture information up to $L-$hops away within the $k-$simplex. Second, it is \emph{inter}-simplex local since computing output $\bby^k$ implies accounting also for the adjacent $(k-1)-$signal $\bbx^{k-1}$ and  $(k+1)-$signal $\bbx^{k+1}$. The $0-$output signal is computed by (graph) filtering with $\bbH_0(\bbL_0)$ the vertex signal $\bbx^{0}$ and by (graph) filtering with $\bbH_1(\bbL_0)$ the divergence signal $\bbx^{0}_{\text{div}} = \bbB_1\bbx^{1}$. Likewise, we can write the input-output relation for the $1-$output signal as
\begin{align}
\begin{split}
\bby^1 
= \bbH_0(\bbL_1)\bbx_\g^1 + \bbH_1(\bbL_1)\bbx^1 + \bbH_2(\bbL_1)\bbx_{\text{cur}}^1
\end{split}
\end{align}
which is now simplicial filtering the gradient flow (induced by node signals) $\bbx_\g^1 = \bbB_1^\top\bbx^{0}$, the input edge flow signals $\bbx^1$, and the curl flow induced by triangle signals $\bbx_{\text{cur}}^1:=\bbB_{2}\bbx^{2}$. The same discussion extends to any simplex $\ccalS^k$. This joint locality explores the computational benefits of the simplicial convolutional filter [cf. \eqref{eq.simplFilter}] and getting the output SC signal $\bby^0, \ldots, \bby^K$ with a computational cost of order $\ccalO(KLN)$, where $N = \max\{N_0, \ldots, N_K\}$ is the maximum number simplices. The linear complexity in $N$ makes the filter bank in \eqref{eq.simpCompFilt} practical even for SCs of large dimension, which links well with the linear-complexity of graph filters \cite{segarra17-linear,coutino2019advances}.

\smallskip
\noindent\textbf{Invariances.} As we process graph signals with an arbitrary node labeling, also in SCs we have simplicial signals with arbitrary labeling of the nodes and with an arbitrary orientation of the flows \cite{schaub2021signal}. Thus, studying the invariances of the filer bank \eqref{eq.simpCompFilt} shows how it exploits the symmetries in the simplicial complexes. 


\begin{proposition}[Permutation equivariance]\label{prop_permEquiv} Let $\ccalX^K$ be a simplicial complex and consider the permutation matrices $\bbP_k$ as those belonging to the set
\begin{equation*}
\ccalP = \{\bbP_k \in \{0, 1\}^{N_k \times N_k} : \bbP_k \bbone = \bbone, \bbP_k^\top\bbone = \bbone, k \ge 0		\}.
\end{equation*}
Permutation matrices $\bbP_k$ are such that products ${\wtbx}^k = \bbP_k^\top\bbx^k$ are reorderings of the entries of $\bbx^k$ and that the permuted Hodge Laplacian ${\wtbL}_k = \bbP_k^\top\bbL_k\bbP_k$ is a reordering of the rows and columns of $\bbL_k$. The simplicial complex filter bank in \eqref{eq.simpCompFilt} is permutation equivariant.
\end{proposition}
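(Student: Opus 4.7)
The plan is to show that when every $k$-level signal $\bbx^k$ is relabeled by $\bbP_k$ and every Hodge Laplacian and incidence matrix is correspondingly relabeled, the filter bank output at level $k$ comes out exactly as $\bbP_k^\top \bby^k$. Since the filter bank is a sum of three structurally similar terms, it suffices to verify the identity term by term and then invoke linearity.

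First I would establish the transformation rules for the building blocks. Because $\bbB_k \in \reals^{N_{k-1}\times N_k}$ is indexed by $(k-1)$-simplices on the rows and $k$-simplices on the columns, relabeling both levels gives $\widetilde{\bbB}_k = \bbP_{k-1}^\top \bbB_k \bbP_k$. Substituting this into \eqref{eq.HodgeLapl} and using $\bbP_k\bbP_k^\top = \bbI$, one gets
\begin{equation*}
\widetilde{\bbL}_{k\ell} = \widetilde{\bbB}_k^\top \widetilde{\bbB}_k = \bbP_k^\top \bbL_{k\ell}\bbP_k, \quad
\widetilde{\bbL}_{ku} = \widetilde{\bbB}_{k+1}\widetilde{\bbB}_{k+1}^\top = \bbP_k^\top \bbL_{ku}\bbP_k,
\end{equation*}
so $\widetilde{\bbL}_k = \bbP_k^\top \bbL_k \bbP_k$. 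A telescoping argument using $\bbP_k\bbP_k^\top = \bbI$ then yields $\widetilde{\bbL}_{k\ell}^{\,l_1} = \bbP_k^\top \bbL_{k\ell}^{\,l_1}\bbP_k$ and similarly for $\widetilde{\bbL}_{ku}^{\,l_2}$, hence $\bbH_i(\widetilde{\bbL}_k) = \bbP_k^\top \bbH_i(\bbL_k)\bbP_k$ by \eqref{eq.simplFilter} since the filter coefficients $\alpha_{l_1},\beta_{l_2}$ are unchanged by relabeling.

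Next I would assemble the filter bank \eqref{eq.simpCompFilt} on the permuted inputs $\widetilde{\bbx}^j = \bbP_j^\top \bbx^j$. For the generic level $1\le k\le K-1$,
\begin{align*}
\widetilde{\bby}^k
&= \bbH_0(\widetilde{\bbL}_k)\widetilde{\bbB}_k^\top \widetilde{\bbx}^{k-1} + \bbH_1(\widetilde{\bbL}_k)\widetilde{\bbx}^k + \bbH_2(\widetilde{\bbL}_k)\widetilde{\bbB}_{k+1}\widetilde{\bbx}^{k+1}\\
&= \bbP_k^\top \bbH_0(\bbL_k)\bbP_k\bbP_k^\top\bbB_k^\top\bbP_{k-1}\bbP_{k-1}^\top \bbx^{k-1} + \bbP_k^\top \bbH_1(\bbL_k)\bbP_k\bbP_k^\top \bbx^k\\
&\quad + \bbP_k^\top \bbH_2(\bbL_k)\bbP_k\bbP_k^\top \bbB_{k+1}\bbP_{k+1}\bbP_{k+1}^\top \bbx^{k+1}\\
&= \bbP_k^\top\big(\bbH_0(\bbL_k)\bbB_k^\top \bbx^{k-1} + \bbH_1(\bbL_k)\bbx^k + \bbH_2(\bbL_k)\bbB_{k+1}\bbx^{k+1}\big)\\
&= \bbP_k^\top \bby^k,
\end{align*}
where every $\bbP_j\bbP_j^\top$ collapses to the identity. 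The boundary cases $k=0$ and $k=K$ follow from the same computation with the missing term dropped.

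The only potentially tricky point is being careful about which permutation acts on which index of $\bbB_k$: the incidence matrix bridges two consecutive simplicial levels, so both $\bbP_{k-1}$ and $\bbP_k$ enter the relabeling. Everything else is bookkeeping with the orthogonality of permutation matrices, so I do not anticipate any essential obstacle.
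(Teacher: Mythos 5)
Your proof is correct and follows essentially the same route as the paper's own sketch: transform the incidence matrices as $\widetilde{\bbB}_k = \bbP_{k-1}^\top\bbB_k\bbP_k$, deduce the conjugation rule for the Hodge Laplacians and hence for the polynomial filters, and collapse the $\bbP_j\bbP_j^\top$ products in the filter bank. Your version is in fact slightly more careful than the paper's: since $\bbL_{ku}=\bbB_{k+1}\bbB_{k+1}^\top$ is $N_k\times N_k$, it is conjugated by $\bbP_k$ as you write, whereas the paper's sketch misprints this as conjugation by $\bbP_{k+1}$.
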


\begin{proof} \emph{(Sketch)} Consider the $k$th input-output relation in \eqref{eq.simpCompFilt}. The permutation matrices $\bbP_k$ transform the incidence matrices as $\widetilde{\bbB}_k = \bbP_{k-1}^\top\bbB_k\bbP_k$ and $\widetilde{\bbB}_{k+1} = \bbP_k^\top\bbB_{k+1}\bbP_{k+1}$. Then, using the identify $\bbP_k^\top\bbP_k = \bbP_k\bbP_k^\top = \bbI$, the permuted lower- and upper-Laplacians are respectively $\widetilde{\bbL}_{k\ell} = \bbP_k^\top\bbL_{k\ell}\bbP_k$ and $\widetilde{\bbL}_{ku} = \bbP_{k+1}^\top\bbL_{ku}\bbP_{k+1}$. Consequently, we can write $\bbH(\widetilde{\bbL}_k) = \bbP_k^\top\bbH({\bbL}_k)\bbP_k$. Using the above, we can apply the permuted filters to the permuted inputs $\widetilde{\bbx}^{k-1} = \bbP_{k-1}^\top\bbx^{k-1}$, ${\wtbx}^k = \bbP_k^\top\bbx^k$, and ${\wtbx}^{k+1} = \bbP_{k+1}^\top\bbx^{k+1}$ and prove that the output is permuted likewise, i.e., $\widetilde{\bby}^k = \bbP_k^\top\bby^k$. 
\end{proof}



\begin{proposition}[Orientation equivariance]\label{prop_orientEquiv}
Let $\ccalX^K$ be a simplicial complex of order $K$ and consider the orientation matrices $\bbD_k$ as those belonging to the set
\begin{equation*}
\ccalD = \{\bbD_k = \diag(\bbd_k): \bbd_k \in \{\pm 1\}^{N_k}, k \ge 1, \bbd_0 = \bbone	\}.
\end{equation*}
Orientation matrices $\bbD_k$ are such that products ${\wtbx}^k = \bbD_k^\top\bbx^k$ are reorientations of the flow directions in vectors $\bbx^k$ and that the reoriented Hodge Laplacian is given by ${\wtbL}_k = \bbD_k\bbL_k\bbD_k$. The simplicial complex filter bank in \eqref{eq.simpCompFilt} is orientation equivariant.
\end{proposition}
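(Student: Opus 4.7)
The plan is to mirror the structure of the permutation-equivariance proof, replacing the orthogonal permutations $\bbP_k$ with the involutive diagonal sign matrices $\bbD_k$ and exploiting the fact that $\bbD_k^\top\bbD_k = \bbD_k\bbD_k = \bbI$ since $\bbD_k = \diag(\bbd_k)$ with $\bbd_k \in \{\pm 1\}^{N_k}$.

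First, I would establish how the incidence matrices transform. Reorienting a $k$-simplex flips the sign of every nonzero entry in the corresponding column of $\bbB_k$ (since the boundary changes sign), while reorienting a $(k-1)$-simplex flips the sign of the corresponding row; hence
\begin{equation*}
\widetilde{\bbB}_k = \bbD_{k-1}\bbB_k\bbD_k, \qquad k\ge 1,
\end{equation*}
with the convention $\bbD_0 = \bbI$. Inserting this into \eqref{eq.HodgeLapl} and using $\bbD_{k-1}\bbD_{k-1} = \bbI$ gives
\begin{equation*}
\widetilde{\bbL}_{k\ell} = \widetilde{\bbB}_k^\top\widetilde{\bbB}_k = \bbD_k\bbL_{k\ell}\bbD_k, \quad \widetilde{\bbL}_{ku} = \widetilde{\bbB}_{k+1}\widetilde{\bbB}_{k+1}^\top = \bbD_k\bbL_{ku}\bbD_k,
\end{equation*}
so the claim $\widetilde{\bbL}_k = \bbD_k\bbL_k\bbD_k$ in the statement is recovered.

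Next, I would lift this to the filter. Because $\bbD_k\bbD_k = \bbI$, every power satisfies $(\bbD_k\bbL_{k\ell}\bbD_k)^{l_1} = \bbD_k\bbL_{k\ell}^{l_1}\bbD_k$ and likewise for $\bbL_{ku}$, so the polynomial in \eqref{eq.simplFilter} transforms cleanly as
\begin{equation*}
\bbH(\widetilde{\bbL}_k) = \bbD_k\bbH(\bbL_k)\bbD_k.
\end{equation*}

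Finally, I would substitute the reoriented inputs $\widetilde{\bbx}^{k-1} = \bbD_{k-1}\bbx^{k-1}$, $\widetilde{\bbx}^k = \bbD_k\bbx^k$, $\widetilde{\bbx}^{k+1} = \bbD_{k+1}\bbx^{k+1}$ into the $k$th line of the filter bank \eqref{eq.simpCompFilt}, collect all adjacent pairs $\bbD_j\bbD_j = \bbI$ (one pair from each term: between the outer $\bbH_i(\widetilde{\bbL}_k)$ and $\widetilde{\bbB}_\bullet$, and between $\widetilde{\bbB}_\bullet$ and $\widetilde{\bbx}^\bullet$), and factor a single leftmost $\bbD_k$ out of the sum to conclude $\widetilde{\bby}^k = \bbD_k\bby^k$. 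The boundary cases $k=0$ (where $\bbD_0 = \bbI$) and $k=K$ (no upper term) follow by the same bookkeeping with one fewer term.

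The main obstacle, and the only non-mechanical step, is getting the sign bookkeeping for the incidence matrices right, in particular verifying that the lower and upper Laplacians pick up the \emph{same} sandwich $\bbD_k(\cdot)\bbD_k$ even though $\widetilde{\bbB}_k$ involves $\bbD_{k-1}$ and $\widetilde{\bbB}_{k+1}$ involves $\bbD_{k+1}$; once the $\bbD_{k\pm 1}$ factors cancel via their involutive property, the rest of the argument is purely algebraic and parallels Proposition~\ref{prop_permEquiv}.
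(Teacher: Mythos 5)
Your proposal is correct and follows essentially the same route as the paper's own sketch: transform the incidence matrices as $\widetilde{\bbB}_k = \bbD_{k-1}\bbB_k\bbD_k$, use the involution $\bbD_j\bbD_j = \bbI$ to get $\bbH(\widetilde{\bbL}_k) = \bbD_k\bbH(\bbL_k)\bbD_k$, and cancel adjacent sign matrices after substituting the reoriented inputs. In fact your derivation of $\widetilde{\bbL}_{ku} = \bbD_k\bbL_{ku}\bbD_k$ is the dimensionally correct version of what the paper writes (the paper's sketch sandwiches $\bbL_{ku}$ with $\bbD_{k+1}$, which is a typo since $\bbL_{ku}$ is $N_k\times N_k$).
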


\begin{proof} \emph{(Sketch)} Consider the $k$th input-output relation in \eqref{eq.simpCompFilt}. The orientation matrices $\bbD_k$ transform the incidence matrices as $\widetilde{\bbB}_k = \bbD_{k-1}\bbB_k\bbD_k$ and $\widetilde{\bbB}_{k+1} = \bbD_k\bbB_{k+1}\bbD_{k+1}$. Using then the identity $\bbD_k\bbD_k = \bbI$, the oriented lower- and upper-Laplacians are respectively $\widetilde{\bbL}_{k\ell} = \bbD_k\bbL_{k\ell}\bbD_k$ and $\widetilde{\bbL}_{ku} = \bbD_{k+1}\bbL_{ku}\bbD_{k+1}$, which in turn leads to the oriented filters $\bbH(\widetilde{\bbL}_k) = \bbD_k\bbH({\bbL}_k)\bbD_k$. Using direct substitutions and simple algebra we can prove $\widetilde{\bby}^k = \bbD_k\bby^k$.
\end{proof}

Proposition~\ref{prop_permEquiv} (resp.~\ref{prop_orientEquiv}) implies that if relabel the SC (resp. reorient the flows) and apply the filter bank \eqref{eq.simpCompFilt}, the output is a relabeled (resp. reoriented) version of the output we would have gotten by applying the filter bank before relabelling (resp. reorientation). These equivariances also imply that we can learn the filter bank to process a given simplicial complex by seeing as examples only permuted and reoriented versions of it. I.e., if two parts of the SC are topologically identical and the simplices support identical signal flows, an SC filter bank yields identical outputs. These findings generalize the permutation equivariance seen for graph filters \cite{gama2020graphs,Isufi21-EdgeNets}.

\begin{remark}
\label{rem_simplAdj} 
When operating on signals transformed from lower-/upper-adjacent simplices, the simplicial filter bank uses only lower/upper- Laplacians. Consider the expression for $\bby^k$ for $k = 1, \ldots, K-1$ in \eqref{eq.simpCompFilt}. The first term reduces to $\bbH_0(\bbL_k)\bbB_k^\top\bbx^{k-1} = \bbH_0(\bbL_{k\ell})\bbB_k^\top\bbx^{k-1} + \beta_{00} \bbB_k^\top\bbx^{k-1}$ because of Property~\ref{prop_incl}. 
 Likewise, the third term reduces to $\bbH_2(\bbL_k)\bbB_{k+1}\bbx^{k+1} = \alpha_{20}\bbB_{k+1}\bbx^{k+1} + \bbH_2(\bbL_{ku})\bbB_{k+1}\bbx^{k+1} $.
 That is, a signal coming from lower simplices does not propagate into upper-adjacency paths and viceversa.
\qed
\end{remark}


\section{Frequency Response}
\label{sec:spec_resp}

We now analyze the properties of the SC filter bank in the spectral domain to give further insight into its filtering behavior.

\subsection{Simplicial Fourier Transform}

The $k-$th Hodge Laplacian can be eigendecomposed as $\bbL_k = \bbU_k \bbLambda_k\bbU_k^\top$, where matrix $\bbU_k = [\bbu_{k1}, \ldots, \bbu_{kN_k}]$ collects the eigenvectors and $\bbLambda_k = \diag(\lambda_{k1}, \ldots, \lambda_{kN_k})$ the eigenvalues on the main diagonal. For a $k-$signal $\bbx^k$, the simplicial Fourier transform (SFT) is defined as $\hbx^k = \bbU_k^\top\bbx^k$ \cite{barbarossa2020topological}. The inverse SFT is $\bbx^k = \bbU\hbx^k$. We refer to the eigenvalues $\lambda_{ki}$ as the simplicial frequencies. The SFT generalizes the graph Fourier transform \cite{barbarossa2020topological}.

The eigenvectors of $\bbL_k$ span the three subspaces of the Hodge decomposition [cf. \eqref{eq.spaceDecomp}]. That is, there exists some orthogonal eigenvectors $\bbU_{\g k} \in \reals^{N_k \times N_{g}}$ that span $\text{im}(\bbB_k^\top)$, $\bbU_{\c k} \in \reals^{N_k \times N_{c}}$ that span $\text{im}(\bbB_{k+1})$, and $\bbU_{\h k} \in \reals^{N_k \times N_{h}}$ that span $\text{ker}(\bbL_k)$. We collect the corresponding eigenvalues (simplicial frequencies) in sets $\ccalQ_\g = \{\lambda_{\g 1}, \ldots, \lambda_{\g N_\g}\}$, $\ccalQ_\c = \{\lambda_{\c 1}, \ldots, \lambda_{\c N_\c}\}$, and $\ccalQ_\h = \{\lambda_{\h 1} = 0, \ldots, \lambda_{\h N_\h} = 0\}$. Using these eigenvectors, we can project $k-$signals onto the respective spectral components as: $\hbx^k_\g = \bbU_{\g k}^\top\bbx^k$, $\hbx^k_\c = \bbU_{\c k}^\top\bbx^k$, and $\hbx^k_\h = \bbU_{\h k}^\top\bbx^k$, which show how the simplicial Fourier coefficients are spread among the three types of simplicial frequencies. Next, we shall see how the simplicial filter bank acts on these projections and achieves the desired filtering.

\subsection{Frequency Response of SC Filters}

To understand the spectral behavior of filter bank \eqref{eq.simpCompFilt}, we need first to understand the spectral behavior of the simplicial convolutional filter \eqref{eq.simplFilter}. Using the above discussion and the eigendecomposition of the $k$th Hodge Laplacian, the frequency response of the simplicial convolutional filter \eqref{eq.simplFilter} is
\begin{align}\label{eq.scf_resp}
\begin{split}
   \widehat{H}(\lambda_{ki})=\left\{
                \begin{array}{ll}
                  \alpha_0 + \beta_0 & \for~\lambda_{ki} \in \ccalQ_\h\\
                  \beta_0 + \sum_{l_1 = 0}^{L_1}\alpha_{l_1}\lambda_i^{l_1} & \for~\lambda_{ki} \in \ccalQ_\g\\
                  \alpha_0 + \sum_{l_2 = 0}^{L_2}\beta_{l_2}\lambda_i^{l_2} & \for~\lambda_{ki} \in \ccalQ_\c
                \end{array}
              \right.
\end{split}.
\end{align}
That is, we have independent control on the simplicial frequencies $\ccalQ_\g$ through $\alpha_{0}, \ldots, \alpha_{L_1}$ and on $\ccalQ_\c$ through $\beta_{0}, \ldots, \beta_{L_2}$ but we have no independent control on the simplicial frequencies $\ccalQ_\h$ \cite{yang2021finite}. 
The filter input-output relationship at the $i$th frequency is $\hat{y}_i^k = \widehat{H}(\lambda_i)\hat{x}_i^k$; i.e., it respects the convolution theorem by operating a point-wise multiplication in the SFT domain.

\begin{figure}[t!]
  \vspace{0mm}
  \centering
  \includegraphics[width=1\linewidth,scale=1]{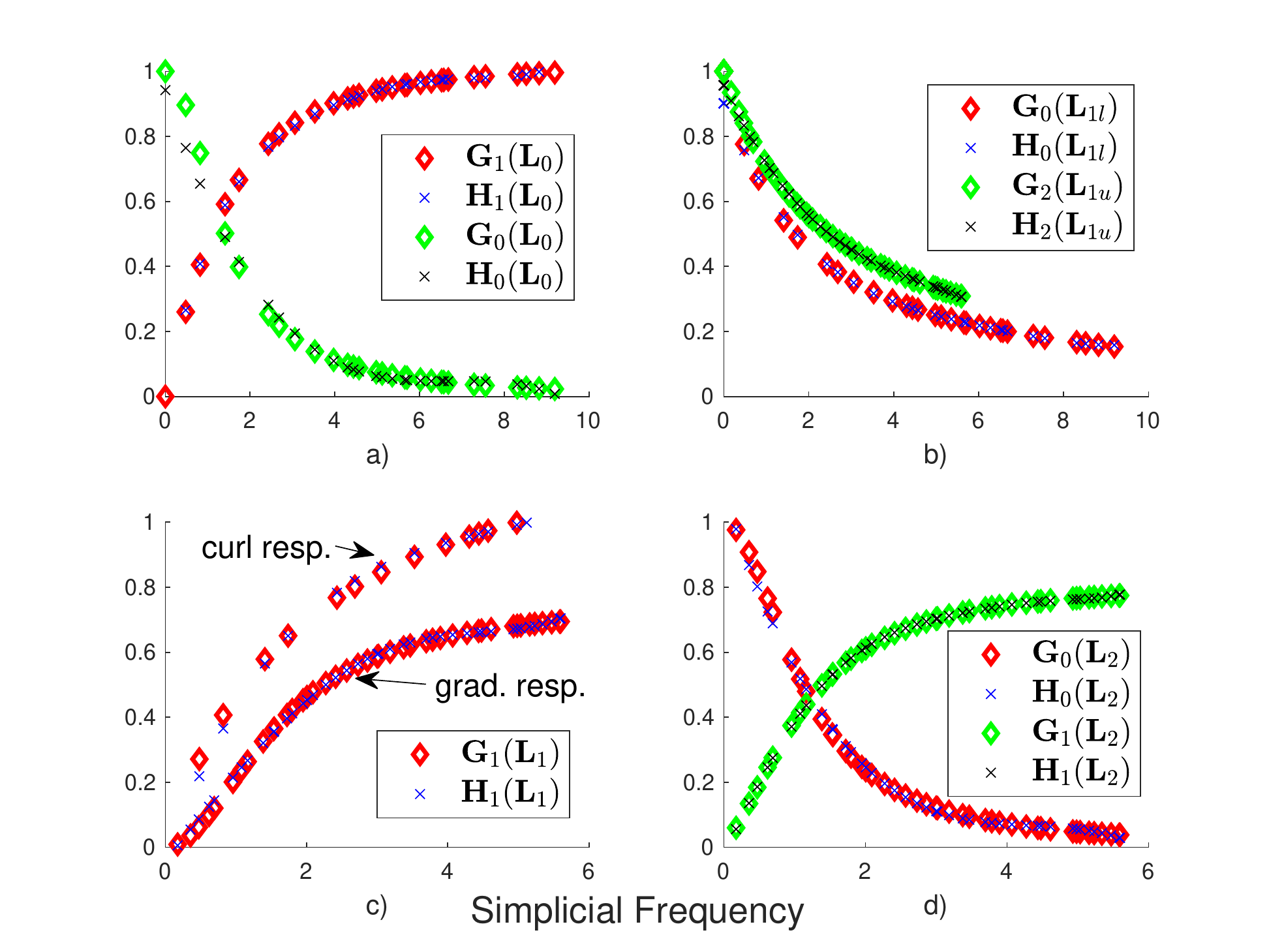}
  \vspace{-6mm}
  \caption{Data-driven approximation of inverse filtering. The $\diamond$ markers are the inverse filters, the $\times$ markers are their approximation with the filter bank \eqref{eq.simpCompFilt}. a) node filters $k = 1$; b) edge filters $k = 1$ for the gradient and curl influences; c) edge filters $k = 1$ for the curl and gradient frequencies; d) triangle filters $k = 2$.}
  \vspace{0mm}
  \label{fig:simplicia_filter_bank_frequency_response}
\end{figure}

Using these insights, let us now analyze the frequency behavior of the $k$th input-output relation in \eqref{eq.simpCompFilt}. From Remark~\ref{rem_simplAdj}, we can write the $k$th branch as
\begin{align}\label{eq.kbranchFB}
\begin{split}
\bby^k \!\!\!=\!\! \bbH_0(\bbL_{k\ell}\!)\bbx^k_\g \!\!+\! \beta_{00}\bbx^k_\g  \!+\! \bbH_1(\bbL_k\!)\bbx^k \!\!+\!\alpha_{20}\bbx_{\text{cur}}^k \!+\! \bbH_2(\bbL_{ku}\!)\bbx_{\text{cur}}^k
\end{split}
\end{align}
Using the SFT, its relation with the Hodge decomposition [cf. \eqref{eq.spaceDecomp}], and the filter response in \eqref{eq.scf_resp}, the $i$th input-output spectral relation is
%

\begin{align}
  \begin{split}
     \hat{y}_i^k \!\!=\!\!\left\{\!\!\!\!
                  \begin{array}{ll}
                     \widehat{H}_1(\lambda_{ki})\hat{x}^k_i &  \!\!\!\!\for~\lambda_{ki} \in \ccalQ_\h\\
                    \widehat{H}_0(\lambda_{ki})\hat{x}^k_{\g i} + \widehat{H}_1(\lambda_{ki})\hat{x}^k_i &\!\!\!\! \for~\lambda_{ki} \in \ccalQ_\g\\
                    \widehat{H}_1(\lambda_{ki})\hat{x}^k_i + \widehat{H}_2(\lambda_{ki})\hat{x}_{\text{cur} i}^k &\!\!\!\! \for~\lambda_{ki} \in \ccalQ_\c
                  \end{array}
                \right.
  \end{split}
  \end{align}
%
%
%
%
where $\hat{x}^k_{\g i}$ is the $i$th SFT coefficient of $\bbx_\g^k = \bbB_k^\top\bbx^{k-1}$ and $\hat{x}_{\text{cur} i}^k$ is the $i$th SFT coefficient of $\bbx_{\text{cur}}^k = \bbB_{k+1}\bbx^{k+1}$. Note that $\hat{H}_0$, $\hat{H}_1$ and $\hat{H}_2$ take the form in \eqref{eq.scf_resp}. This implies that filtering the signal from lower-adjacent simplices $ \bbB_k^\top\bbx^{k-1}$ does not play a role in $\ccalQ_\h$ and $\ccalQ_\c$ as $\bbx_\g^k$ contains only gradient spectral components; and likewise filtering the signal from upper-adjacent simplices $ \bbB_k^\top\bbx^{k-1}$ does not play a role over $\ccalQ_h$ and $\ccalQ_\g$ as $\bbx_\text{cur}^k$ contains only curl spectral components. This is intuitive because, for example, for $k=1$, edge flows induced from a node signal have no harmonic and curl components (being curl-free), and the edge flows induced from a triangle signal have no harmonic and gradient components (being divergence-free).


\begin{figure}[t!]
  \centering
  \includegraphics[width=1\linewidth,scale=1]{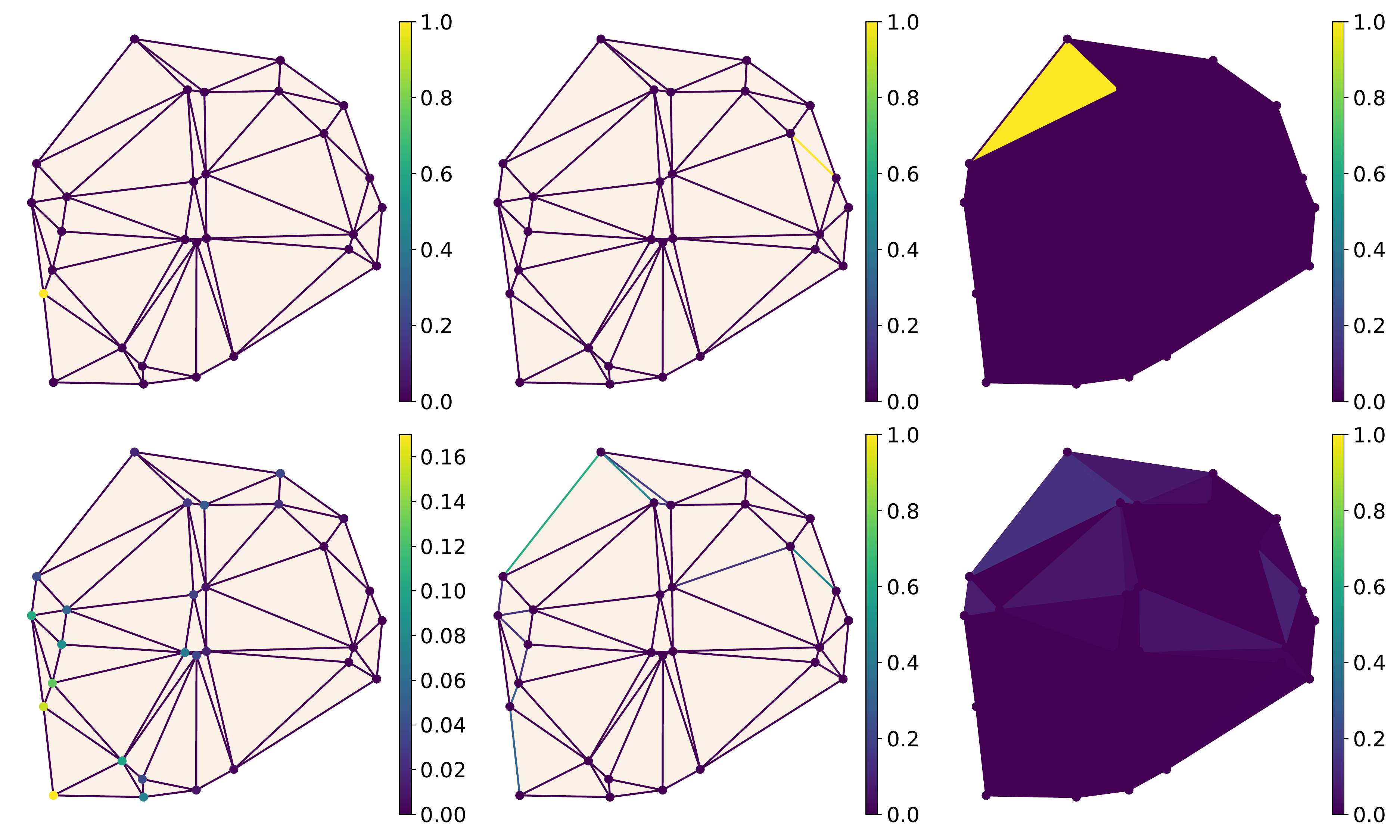}
  \vspace{-6mm}
  \caption{GHD based on the kernel approximation by SC filter banks. Top three (from left to right): node, edge, and triangle indicator input signals, respectively. Bottom three (from left to right): the diffused results on nodes, edges and triangles.}
  \label{fig:heat-diffusion}
\end{figure}

\section{Numerical Results}\vspace{-.2cm}
\label{sec:num_res}

We test the proposed filter bank for two tasks. First, we use it to fit inverse SC filtering in a data-driven approach. Second, we implement a heat kernel diffusion on SC with a low complexity. For both experiments, we generated an alpha SC of 29 nodes, 71 edges and 43 triangles with Gudhi toolbox \cite{gudhi:urm,gudhi:AlphaComplex}.


\smallskip
\noindent \textbf{Model fitting.} We generated 10 training samples $(\bbx^k,\bby^k)$ for $k=0,1,2$ by inputing a random simplicial singal $\bbx^k$ drawn from a zero-mean normal distribution, and outputting $\bby^k$ based on model \eqref{eq.simpCompFilt}. We considered rational filters $\bbG_i(\bbL_k) = \bbH_m^{-1}(\bbL_k) \bbH_n(\bbL_k)$ with $m$ and $n$ indicate different filter parameters. By stacking the shifted input and output training samples, we can solve least-squares problems to design an SC filter bank \eqref{eq.simpCompFilt} to fit the desired model parametrized by $\bbG_i(\bbL_k)$. With filter orders between 4 and 7, we achieved NMSEs of $0.03,0.01$ and $0.02$ for $k=0,1,2$, respectively, while with the simple simplicial filter \cite{yang2021finite} we get errors of at least one order higher (resp. 0.3, 0.68, and 0.9). Fig. 1 further shows how the different rational filter responses are well-approximated by the filter bank.



\smallskip
\noindent \textbf{Generalized heat diffusion (GHD).} The GHD is used to smooth meshes and identify key signatures in them \cite{zobel2011generalized}. The GHD behaves as an SC filter of the form $\bbG(\bbL_k) = \exp(-\gamma_k\bbL_k^2)$ which is computationaly heavy to compute because of the exponential matrix. Instead, we use its analytic frequency response $\widehat{G}(\lambda) = \exp(-\gamma_k\lambda^2_k)$ and universally approximate it with the convolutional filters \cite{yang2021finite}, with an approximation error smaller than $0.1$ but with an implementation cost of around two orders lower. Then, we used such filters for the filter bank in \eqref{eq.simpCompFilt} with $\gamma_0=0.3, \gamma_1=0.05$, and $\gamma_2=0.5$ to see how indicator input signals $\bbx^k$ diffuse within their simplices and in the neighboring ones, illustrated in Fig. \ref{fig:heat-diffusion}. We also observe that if $\gamma_k$ is large, the diffusion on the $k$-simplices attenuates faster, which is expected from the frequency response $\hat{G}(\lambda)$.

\section{Conclusion}
\label{sec:concl}

We proposed a simplicial complex convolutional filter bank that can process signals defined on different levels of the SC by capturing both their intra- and inter-simplex proximities. The intra-simplex proximities are captured by leveraging the shift-and-sum principle of the convolutional operation via the Hodge Laplacian matrices of a simplicial complex. Instead, the inter-simplex proximities are captured by leveraging the incidence matrices to transform the signal onto adjacent simplices and then filter with a simplicial filter defined on the adjacent simplices. We show the proposed filter bank is local and equivariant to both permutations in the simplex labeling and flow orientation. We also analyze the filter bank in the simplicial spectral domain and show it acts as a point-wise multiplication between the filter's frequency response and the simplicial Fourier transform of the signal, respecting the convolution principle.

\newpage
\bibliographystyle{IEEEbib}
\bibliography{myIEEEabrv,bib-nonlinear}

\end{document}